\theoremstyle{plain}
\newtheorem{thm}{\protect\theoremname}
\renewcommand\[{\begin{equation}}
\renewcommand\]{\end{equation}}
\providecommand{\theoremname}{Theorem}
\providecommand{\tabularnewline}{\\}
\begin{document}
\global\long\def\u#1{\mathrm{\:#1}}

\title{Topological magnons for thermal Hall transport in frustrated magnets with bond-dependent interactions}

\author{Emily Z. Zhang}
\affiliation{Department of Physics, University of Toronto, Toronto, Ontario M5S 1A7, Canada}

\author{Li Ern Chern}
\affiliation{Department of Physics, University of Toronto, Toronto, Ontario M5S 1A7, Canada}

\author{Yong Baek Kim}
\affiliation{Department of Physics, University of Toronto, Toronto, Ontario M5S 1A7, Canada}


\begin{abstract}

Thermal transport in topologically-ordered phases of matter provides valuable insights as it can detect the charge-neutral quasiparticles that would not directly couple to electromagnetic probes. An important example is the edge heat transport of the Majorana fermions in a chiral spin liquid, which leads to a half-quantized thermal Hall conductivity. This signature is precisely what has recently been measured in $\alpha$-RuCl$_3$ under external magnetic fields. The plateau-like behaviour of the half-quantized thermal Hall conductivity as a function of external magnetic field, and the peculiar sign change depending on the magnetic field orientations, have been proposed to be strong evidence for the non-Abelian Kitaev spin liquid. On the other hand, for in-plane magnetic fields, it has been theoretically shown that such a sign structure can also arise from topological magnons in the field-polarized state. In this work, we investigate the full implications of topological magnons as heat carriers on thermal transport measurements. We first prove analytically that for any commensurate order with a finite magnetic unit cell, reversing the field direction leads to a sign change in the magnon thermal Hall conductivity in two-dimensional systems. We corroborate this proof numerically with nontrivial magnetic orders as well as the field-polarized state in Kitaev magnets subjected to an in-plane field. In the case of the tilted magnetic field, in which there exists both finite in-plane and out-of-plane field components, we find that the plateau-like behaviour of the thermal Hall conductivity and the sign change upon the reversal of the in-plane component of the magnetic field arises in the partially-polarized state, as long as the in-plane field contribution to the Zeeman energy is significant. While these results are consistent with the experimental observations, we comment on other aspects that require further investigation in future studies.

\end{abstract}

\pacs{}

\maketitle

\section{Introduction}
The search for quantum spin liquid (QSL) states\cite{annurev-conmatphys-020911-125138,Savary_2016,RevModPhys.89.025003,Broholmeaay0668} with long range entanglement and emergent quasiparticle excitations is of both practical and fundamental interests. One avenue to achieve a QSL state is via magnetic frustrations from bond-dependent interactions, which has motivated the study of $4d/5d$ materials with strong spin-orbit coupling\cite{PhysRevLett.102.017205,PhysRevLett.105.027204,Katukuri_2014,PhysRevLett.112.077204,PhysRevB.90.041112,nphys3322,s41567-020-0874-0,annurev-conmatphys-031115-011319,Schaffer_2016,Winter_2017,s42254-019-0038-2,Janssen_2019}. These materials naturally possess bond-dependent interactions and have the potential to realize the $S = 1/2$ Kitaev model on a honeycomb lattice\cite{KITAEV20062}, and has garnered much intrigue due to it being exactly solvable and having a spin liquid ground state\cite{PhysRevLett.114.147201,nmat4604,Banerjee1055,PhysRevLett.119.037201,nphys4264,PhysRevLett.119.227202,s41535-018-0079-2}. A great leap forward was achieved in the search for the Kitaev spin liquid (KSL) when a half-quantized thermal Hall conductivity was measured in the material $\alpha$-RuCl$_3$ under external magnetic fields\cite{s41586-018-0274-0}. The half-quantization is a signature of Majorana fermions – the fractionalized excitations of the KSL\cite{KITAEV20062,PhysRevLett.119.127204,PhysRevX.8.031032,PhysRevLett.121.147201,PhysRevB.100.060405}. Their presence in a real material, once confirmed, would be a major breakthrough in spin liquid physics\cite{1809.08247,s41467-019-08459-9,PhysRevLett.123.197201,s41467-019-10405-8,PhysRevB.100.144445,PhysRevResearch.2.013014,s41467-020-15320-x,2007.07259,2009.03332}.

The half-quantized thermal Hall conductivity is observed in $\alpha$-RuCl$_3$ not only under tilted fields, but also under completely in-plane fields\cite{2001.01899,czajka2021oscillations}. This is unlike in ordinary metals, in which the electronic Hall effect only occurs when the field has a finite out-of-plane component. Such an anomalous Hall effect in $\alpha$-RuCl$_3$ is postulated to originate from Majorana fermions in the non-Abelian Kitaev spin liquid. For in-plane fields, a sign change in the thermal Hall conductivity was measured when the field was flipped from the $a$ to $-a$ direction, while no appreciable signals were measured when the field was applied along the $b$ direction (see Fig. \ref{fig:honeycomb}). Although this peculiar sign structure of the thermal Hall conductivity\cite{2012.11604} is consistent with the non-Abelian KSL scenario\cite{s41467-019-10405-8,2004.13723}, a recent theoretical study\cite{chern2020sign} demonstrated that it can also arise from topological magnons\cite{1.4959815,PhysRevB.94.094405,Owerre_2017,PhysRevB.98.060404,PhysRevB.98.060405} in the polarized state in Kitaev magnets. In the case of tilted fields in the $ac$ plane, experiments have measured a similar sign change in the thermal Hall conductivity when the tilting angle from the $c$-axis was switched from $-60$ degrees to $+60$ degrees in the suspected spin liquid regime\cite{2001.01899}. Determining whether these signals are uniquely caused by the non-Abelian QSL state is crucial, thus there is a pressing need to critically examine other possible mechanisms. 

\begin{figure}
\begin{centering}
\includegraphics[width=0.9\columnwidth]{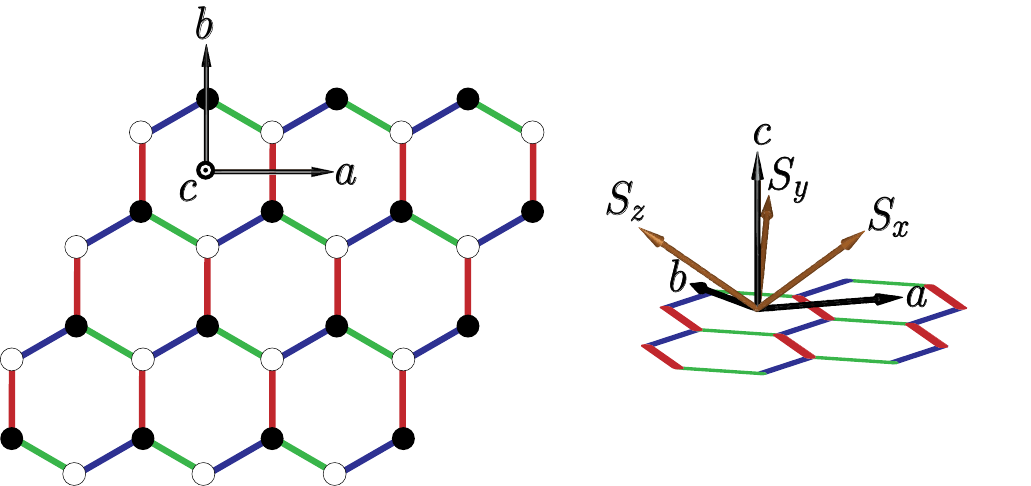}
\par\end{centering}
\caption{The $x$, $y$, and $z$ bonds in $K\Gamma\Gamma'$ model are coloured in blue, green, and red, respectively. The crystallographic $a$ (in-plane), $b$ (in-plane), and $c$ (out-of-plane) directions are indicated. In the cubic basis according to which the spin components in the $K\Gamma\Gamma'$ model are defined, the $a$, $b$, and $c$ directions are given by $[11\bar{2}]$, $[\bar{1}10]$, and $[111]$, respectively.
\label{fig:honeycomb}}
\end{figure}

In this work, we investigate the possibility of magnons as the heat carriers responsible for the thermal transport observed in experiments. First, we theoretically demonstrate that a sign change in the magnon thermal Hall conductivity follows from reversing the field direction. This generic property holds for any magnetic order with a finite unit cell, and for any field direction, in a bilinear spin model. A concise proof is presented in the main text. In a previous work, the magnon thermal Hall effect in the polarized state was explored for fields parallel to the $a$ and $b$ directions \cite{chern2020sign}. As an extension to this study, we consider the $K\Gamma\Gamma'$ model and determine the classical phase diagram in the presence of all possible in-plane magnetic field directions. We then show that the general sign change property indeed holds for various magnetic orders and polarized phases appearing in the phase diagram. 

Furthermore, we explicitly compute the thermal Hall conductivities for the $K\Gamma\Gamma'$ model under tilted fields. As the field increases, the system does not immediately polarize, but enters a ferromagnetic phase with a dominant in-plane magnetization\cite{PhysRevB.96.064430}. Therefore, right above the ferromagnetic phase transition, a sign change in the thermal Hall conductivity under the reversal of the in-plane field component can still be observed, as in the case of completely in-plane fields. For the two-dimensional zig-zag ordered state\cite{PhysRevB.91.144420,PhysRevB.92.235119} below the critical field, the sign change behaviour can also be observed. We show that certain aspects of magnon heat transport in the $K\Gamma\Gamma'$ model are consistent with experimental results, and highlight others that require further investigations. 

The rest of the paper is organized as follows. In Section II, we provide a proof for the sign structure of the magnon thermal Hall conductivity on a bilinear spin Hamiltonian. Section III presents a classical phase diagram with magnetic fields in the $ab$ plane as well as numerical results of the thermal Hall conductivities. Section IV explores the thermal Hall effect under tilted magnetic fields. We discuss the dependence of the overall sign of the thermal Hall conductivity on the model parameters in Section V. Lastly, Section VI discusses the key findings of this work and provides a future outlook. 

\section{Thermal Hall sign structure}
Consider a general bilinear spin Hamiltonian with a Zeeman field,
\begin{equation} \label{bilinearhamiltonian}
H=\sum_{ij}\mathbf{S}_{i}^{T}H_{ij}\mathbf{S}_{j}-\mathbf{h}^{T}\sum_{i}\mathbf{S}_{i}
\end{equation}
where $H_{ij}$ is a real symmetric matrix that encodes the interaction between the spins at sites $i$ and $j$.
Suppose that we have obtained the ground state spin configuration $\lbrace \mathbf{S}_i \rbrace$ of the classical model in \eqref{bilinearhamiltonian}.
To construct the linear spin wave Hamiltonian \cite{PhysRev.58.1098,Jones_1987} for this ground state, the spins are first passively rotated on each site of the lattice
such that the $z$-direction of the local coordinate system is parallel
to the spin orientation. We define the following rotation matrix 
\begin{equation} \label{rotationmatrix}
R_{i}=\left[\begin{matrix}\cos\theta_{i}\cos\phi_{i} & -\sin\phi_{i} & \sin\theta_{i}\cos\phi_{i}\\
\cos\theta_{i}\sin\phi_{i} & \cos\phi_{i} & \sin\theta_{i}\sin\phi_{i}\\
-\sin\theta_{i} & 0 & \cos\theta_{i}
\end{matrix}\right]
\end{equation}
where $\theta_{i}$ and $\phi_{i}$ are polar angles in the original
cubic $xyz$ coordinates, and 
$\left(S_{i}^{x},S_{i}^{y},S_{i}^{z}\right)=S\left(\sin\theta_{i}\cos\phi_{i},\sin\theta_{i}\sin\phi_{i},\cos\theta_{i}\right),$
such that
\[
\mathbf{S}_{i}=R_{i}\tilde{\mathbf{S}}_{i},
\]
where, classically, $\tilde{\mathbf{S}}_{i}=\left(0,0,S\right)$ is the spin at $i$ measured in the local coordinate system.

The rotated, site-dependent Hamiltonian is then given by 

\[
H=\sum_{ij}\mathbf{\tilde{S}}_{i}^{T}\tilde{H}_{ij}\mathbf{\tilde{S}}_{j}-\sum_{i}\mathbf{\tilde{h}}_{i}^{T}\mathbf{\tilde{S}}_{i}
\]
where $\tilde{H}_{ij}=R_{i}^{T}H_{ij}R_{j}$ and $\tilde{\mathbf{h}}_{i}=R_{i}^T\mathbf{h}$. Note that all the elements of $\tilde{H}_{ij}$ are real. 

In the rotated basis, we perform a Holstein-Primakoff expansion \cite{PhysRev.58.1098},
\begin{align*}
\tilde{S}_{i}^{z} & =S-b_{i}^{\dag}b_{i}=S-n_{i}\\
\tilde{S}_{i}^{x} & =\frac{\sqrt{2S-n_{i}}b_{i}+b_{i}^{\dag}\sqrt{2S-n_{i}}}{2}\approx\sqrt{\frac{S}{2}}\left(b_{i}+b_{i}^{\dag}\right)\\
\tilde{S}_{i}^{y} & =\frac{\sqrt{2S-n_{i}}b_{i}-b_{i}^{\dag}\sqrt{2S-n_{i}}}{2}\approx-i\sqrt{\frac{S}{2}}\left(b_{i}-b_{i}^{\dag}\right).
\end{align*}
Keeping only terms that contribute up to quadratic order in the Hamiltonian, we get
\begin{align*}
H & =\sum_{\left\langle i,j\right\rangle }\frac{S}{2}\left[\begin{matrix}b_{i} & b_{i}^{\dag}\end{matrix}\right]\left[\begin{matrix}1 & -i\\
1 & i
\end{matrix}\right]\left[\begin{matrix}\tilde{H}_{ij}^{11} & \tilde{H}_{ij}^{12}\\
\tilde{H}_{ij}^{21} & \tilde{H}_{ij}^{22}
\end{matrix}\right]\left[\begin{matrix}1 & 1\\
-i & i
\end{matrix}\right]\left[\begin{matrix}b_{j}\\
b_{j}^{\dag}
\end{matrix}\right]\\
 &-\sum_{\left\langle i,j\right\rangle }S\tilde{H}_{ij}^{33}\left(b_{i}^{\dag}b_{i}+b_{j}^{\dag}b_{j}\right)+\sum_{i}\left(\tilde{\mathbf{h}}_{i}\right)_{3}b_{i}^{\dag}b_{i}.
\end{align*}
Expanding out the first term, we obtain
\footnotesize
\begin{align}
	H & = \sum_{\left\langle i,j\right\rangle }\frac{S}{2}
	\left[\begin{matrix}b_{i} & b_{i}^{\dag}\end{matrix}\right]\times \nonumber\\
	&\left[\begin{matrix}\tilde{H}_{ij}^{11}-\tilde{H}_{ij}^{22}-i\left(\tilde{H}_{ij}^{12}+\tilde{H}_{ij}^{21}\right)
	& \tilde{H}_{ij}^{11}+\tilde{H}_{ij}^{22}+i\left(\tilde{H}_{ij}^{12}-\tilde{H}_{ij}^{21}\right)\\
	\tilde{H}_{ij}^{11}+\tilde{H}_{ij}^{22} -i\left(\tilde{H}_{ij}^{12}-\tilde{H}_{ij}^{21}\right)
	& \tilde{H}_{ij}^{11}-\tilde{H}_{ij}^{22}+i\left(\tilde{H}_{ij}^{12}+\tilde{H}_{ij}^{21}\right)
	\end{matrix}\right]
	\left[\begin{matrix}b_{j} \nonumber \\
	b_{j}^{\dag}
	\end{matrix}\right]\\
 	& -\sum_{\left\langle i,j\right\rangle }S\tilde{H}_{ij}^{33}\left(b_{i}^{\dag}b_{i}+b_{j}^{\dag}b_{j}\right)+\sum_{i}	\left(\tilde{\mathbf{h}}_{i}\right)_{3}b_{i}^{\dag}b_{i}.
\end{align}
\normalsize

We assume that the ground state spin configuration is a magnetic order with a finite unit cell of $\mathcal{N}$ sublattices, so that it has a real space periodicity that allows us to perform the Fourier transform, which yields
\footnotesize
\begin{align}
H & =\frac{S}{4}\sum_{\underset{t\in X_t}{s\in X_s}} \sum_{\mathbf{k}}\left(\left(\tilde{H}_{st}^{11}-\tilde{H}_{st}^{22}-i\left(\tilde{H}_{st}^{12}+\tilde{H}_{st}^{21}\right)\right)e^{i\mathbf{k}\cdot\left(\mathbf{R}_{s}-\mathbf{R}_{t}\right)}b_{\mathbf{k},s}b_{\mathbf{-k},t}+\text{c.c.}\right)\nonumber \\
 & +\frac{S}{4}\sum_{\underset{t\in X_t}{s\in X_s}}\sum_{\mathbf{k}}\left(\left(\tilde{H}_{st}^{11}+\tilde{H}_{st}^{22}+i\left(\tilde{H}_{ij}^{12}-\tilde{H}_{ij}^{21}\right)\right)e^{i\mathbf{k}\cdot\left(\mathbf{R}_{s}-\mathbf{R}_{t}\right)}b_{\mathbf{k},s}b_{\mathbf{k},t}^{\dag}+\text{c.c.}\right)\nonumber \\
 & -\frac{S}{2}\sum_{\underset{t\in X_t}{s\in X_s}}\sum_{\mathbf{k}}\tilde{H}_{st}^{33}\left(b_{\mathbf{k},s}^{\dag}b_{\mathbf{k},s}+b_{\mathbf{k},t}^{\dag}b_{\mathbf{k},t}\right)+\sum_{s\in X_s,\mathbf{k}}\left(\tilde{\mathbf{h}}_{s}\right)_{3}b_{\mathbf{k},s}^{\dag}b_{\mathbf{k},s}.\label{eq:ham}
\end{align}
\normalsize
where $s$ and $t$ label the sublattices of an interacting pair of spins, which belong to the magnetic unit cells $X_s$ and $X_t$ respectively. Factors of 1/2 are inserted to avoid double counting. Then, we can write the spin wave Hamiltonian in the form of 
\[
H/S=\sum_{\mathbf{k}}\Psi_{\mathbf{k}}^{\dag}\mathbf{D}_{\mathbf{k}}\Psi_{\mathbf{k}},
\]
where the spinor $\Psi_{\mathbf{k}}=\left(b_{\mathbf{k},1},\dots,b_{\mathbf{k},\mathcal{N}},b_{-\mathbf{k},1}^{\dag},\dots,b_{-\mathbf{k},\mathcal{N}}^{\dag}\right)$
, and $\mathbf{D}_{\mathbf{k}}$ is a $2\mathcal{N}\times2\mathcal{N}$
matrix of the form 
\[
\mathbf{D}_{\mathbf{k}}=\left[\begin{matrix}\mathbf{A_{k}} & \mathbf{B_{k}}\\
\mathbf{B_{-k}^{*}} & \mathbf{A_{-k}^{T}}
\end{matrix}\right]
\label{eq:dk}
\]
with $\mathbf{A_{k}}$ and $\mathbf{B_{k}}$ being $\mathcal{N}$ dimensional
matrices.
Using this linear spin wave formalism, we present the following theorem:
\begin{thm}
When the direction of the external magnetic field $\mathbf{h}$ in \eqref{bilinearhamiltonian} is reversed, there is a sign change in the magnon thermal Hall conductivity $\kappa_{xy}$ for any magnetic order with a finite unit cell.
\end{thm}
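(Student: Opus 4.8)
The plan is to track how every ingredient of the linear spin wave problem transforms under $\mathbf{h}\to-\mathbf{h}$ and to show that the Bogoliubov--de Gennes (BdG) matrix $\mathbf{D}_{\mathbf k}$ is mapped to its complex conjugate evaluated at $-\mathbf{k}$, from which the sign reversal of $\kappa_{xy}$ follows. The first step is the observation that, because the interaction term in \eqref{bilinearhamiltonian} is even in the spins while the Zeeman term is odd, the classical ground state for the reversed field is obtained simply by $\mathbf{S}_i\to-\mathbf{S}_i$. In terms of the polar angles this is $(\theta_i,\phi_i)\to(\pi-\theta_i,\phi_i+\pi)$, and inserting these into \eqref{rotationmatrix} yields the compact relation $R_i\to R_i\Lambda$ with $\Lambda=\mathrm{diag}(1,-1,-1)$, a $\pi$ rotation about the local $x$ axis. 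I would flag here that the local frame carries a $U(1)$ gauge freedom in its transverse axes; I would fix it through the explicit parametrization in \eqref{rotationmatrix} so that this relation is unambiguous.

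Next I would propagate this into the rotated couplings. Since $\tilde H_{ij}=R_i^T H_{ij}R_j$ and $\Lambda$ is symmetric, the reversed-field couplings are $\tilde H_{ij}'=\Lambda\tilde H_{ij}\Lambda$; because $\Lambda$ flips the local $y$ and $z$ axes, among the entries that build the quadratic Hamiltonian this flips only the cross terms $\tilde H^{12}$ and $\tilde H^{21}$, while leaving $\tilde H^{11}$, $\tilde H^{22}$, and $\tilde H^{33}$ invariant. Likewise $\tilde{\mathbf h}_i'=-\Lambda\tilde{\mathbf h}_i$, so the single relevant Zeeman component $(\tilde{\mathbf h}_i)_3$ is unchanged. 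Feeding these rules into the $2\times2$ bosonic blocks displayed before \eqref{eq:ham} sends each block to its complex conjugate: the real combinations $\tilde H^{11}\pm\tilde H^{22}$ are untouched, while the imaginary parts, proportional to $\tilde H^{12}\pm\tilde H^{21}$, change sign. Since the Fourier phases $e^{i\mathbf k\cdot(\mathbf R_s-\mathbf R_t)}$ are untouched, every coefficient $c(\mathbf k)$ in \eqref{eq:ham} satisfies $c'(\mathbf k)=c(-\mathbf k)^{*}$, which assembles into the key identity $\mathbf{D}_{\mathbf k}'=\mathbf{D}_{-\mathbf k}^{*}$.

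From here the transport statement is a short argument about the paraunitary (bosonic BdG) spectral problem. Writing $\sigma_3=\mathrm{diag}(\mathbb 1_{\mathcal N},-\mathbb 1_{\mathcal N})$ and letting $T_{\mathbf k}$ be the paraunitary matrix that solves $T_{\mathbf k}^\dagger\mathbf{D}_{\mathbf k}T_{\mathbf k}=\mathrm{diag}(E_n)$ with $T_{\mathbf k}^\dagger\sigma_3 T_{\mathbf k}=\sigma_3$, one checks directly that $T_{\mathbf k}'=T_{-\mathbf k}^{*}$ solves the corresponding problem for $\mathbf{D}_{\mathbf k}'$, using that $\sigma_3$ is real. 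Hence the magnon energies obey $E_n'(\mathbf k)=E_n(-\mathbf k)$, while the Berry curvature computed from the columns of $T$ with the $\sigma_3$ metric picks up a single sign from the complex conjugation, giving $\Omega_n'(\mathbf k)=-\Omega_n(-\mathbf k)$. Substituting both relations into the standard Matsumoto--Murakami expression $\kappa_{xy}=-\tfrac{k_B^2 T}{\hbar V}\sum_n\int_{\mathrm{BZ}}c_2(g(E_n))\,\Omega_n\,d^2k$ and relabeling $\mathbf k\to-\mathbf k$ (the Brillouin zone being inversion symmetric) yields $\kappa_{xy}'=-\kappa_{xy}$.

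The routine parts are the angle substitution and the block algebra leading to $\mathbf{D}_{\mathbf k}'=\mathbf{D}_{-\mathbf k}^{*}$. I expect the main obstacle to be the final step: unlike the fermionic case, the magnon bands come from a \emph{paraunitary} rather than unitary diagonalization, so I must verify that the Berry curvature defined with the indefinite $\sigma_3$ metric genuinely transforms as $\Omega_n'(\mathbf k)=-\Omega_n(-\mathbf k)$ under $T_{\mathbf k}'=T_{-\mathbf k}^{*}$, and that the particle--hole doubling of the Nambu spinor does not spoil the one-to-one identification of physical bands. Care is also needed to confirm that the energy weight $c_2(g(E_n))$, which depends only on the spectrum, is compatible with the $\mathbf k\to-\mathbf k$ relabeling, so that the energy and curvature relations combine into an overall sign flip rather than a cancellation.
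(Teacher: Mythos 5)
Your proof is correct and follows essentially the same route as the paper's: the ground-state flip $\mathbf{S}_i\mapsto-\mathbf{S}_i$, the rotation-matrix relation $R_i\mapsto R_i\Lambda$ with $\Lambda=\mathrm{diag}(1,-1,-1)$ (which is exactly the paper's explicit $R_i^{\bar{h}}$), the resulting sign flip of only $\tilde{H}_{ij}^{12}$ and $\tilde{H}_{ij}^{21}$ with the local Zeeman component unchanged, and the key identity $\mathbf{D}_{\mathbf{k}}'=\left(\mathbf{D}_{-\mathbf{k}}\right)^{*}$. The only difference is that you carry out the final step explicitly --- verifying that $T_{\mathbf{k}}'=T_{-\mathbf{k}}^{*}$ respects the paraunitary structure, so that $E_n'(\mathbf{k})=E_n(-\mathbf{k})$ and $\Omega_n'(\mathbf{k})=-\Omega_n(-\mathbf{k})$ combine into $\kappa_{xy}'=-\kappa_{xy}$ --- whereas the paper delegates this last step to Ref.~\onlinecite{chern2020sign}.
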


\begin{proof}
For the Hamiltonian \eqref{bilinearhamiltonian}, when the Zeeman field transforms as $\mathbf{h}\mapsto-\mathbf{h}$,
the spins in the ground state undergo a transformation of
\begin{align*}
\mathbf{S}_{i} & \mapsto-\mathbf{S}_{i}
\end{align*}
or equivalently
$
\left\{ \theta_{i},\phi_{i}\right\} \mapsto\left\{ \pi-\theta_{i},\phi_{i}+\pi\right\}
$, such that the ground state energy remains unchanged. In other words, if $\lbrace \mathbf{S}_i \rbrace$ is the ground state spin configuration under the field $\mathbf{h}$, then $\lbrace -\mathbf{S}_i \rbrace$ is the ground state spin configuration under the field $-\mathbf{h}$.

Therefore, when $\mathbf{h}\mapsto-\mathbf{h}$, the rotation matrix \eqref{rotationmatrix} undergoes a transformation from
\[
R_{i}^{h}=\left[\begin{matrix}\cos\theta_{i}\cos\phi_{i} & -\sin\phi_{i} & \sin\theta_{i}\cos\phi_{i}\\
\cos\theta_{i}\sin\phi_{i} & \cos\phi_{i} & \sin\theta_{i}\sin\phi_{i}\\
-\sin\theta_{i} & 0 & \cos\theta_{i}
\end{matrix}\right]
\]
to 
\begin{align}
R_{i}^{\bar{h}}  =\left[\begin{matrix}\cos\theta_{i}\cos\phi_{i} & \sin\phi_{i} & -\sin\theta_{i}\cos\phi_{i}\\
\cos\theta_{i}\sin\phi_{i} & -\cos\phi_{i} & -\sin\theta_{i}\sin\phi_{i}\\
-\sin\theta_{i} & 0 & -\cos\theta_{i}
\end{matrix}\right].
\end{align}

The rotated spin Hamiltonians under the fields $\mathbf{h}$ and $\mathbf{-h}$ are given by
\begin{align*}
\tilde{H}_{ij}^{h} & =\left(R_{i}^{h}\right)^{T}H_{ij}R_{j}^{h} = \left[\begin{matrix}\tilde{H}_{ij}^{11} & \tilde{H}_{ij}^{12} & \tilde{H}_{ij}^{13} \\ \tilde{H}_{ij}^{21} & \tilde{H}_{ij}^{22} & \tilde{H}_{ij}^{23}\\
\tilde{H}_{ij}^{31} & \tilde{H}_{ij}^{32} & \tilde{H}_{ij}^{33}
\end{matrix}\right] , \\
\tilde{H}_{ij}^{\bar{h}} & =\left(R_{i}^{\bar{h}}\right)^{T}H_{ij}R_{j}^{\bar{h}} =\left[\begin{matrix}\tilde{H}_{ij}^{11} & -\tilde{H}_{ij}^{12} & -\tilde{H}_{ij}^{13}\\
-\tilde{H}_{ij}^{21} & \tilde{H}_{ij}^{22} & \tilde{H}_{ij}^{23}\\
\tilde{-H}_{ij}^{31} & \tilde{H}_{ij}^{32} & \tilde{H}_{ij}^{33}
\end{matrix}\right] .
\end{align*}
Therefore, when the field direction is reversed, the
only components of the rotated Hamiltonian relevant to linear spin
wave theory that pick up a sign change are $\tilde{H}_{ij}^{12}$
and $\tilde{H}_{ij}^{21}$. For the rotated field, the Zeeman term contribution to the linear spin wave theory remains unchanged. From equations (\ref{eq:ham}) and (\ref{eq:dk}), we find that $\mathbf{D}_{\mathbf{k}}^{\bar{h}}=\left(\mathbf{D}_{\mathbf{-k}}^{h}\right)^{*}$. If the system is two-dimensional, we can show that $\kappa_{xy}$ changes sign, and the details of the proof can be found in a previous work\cite{chern2020sign}.
\end{proof}

Theorem 1 holds for any magnetic order with a finite unit cell on the underlying Bravais lattice, and for any field direction. It is independent of the details of the spin interactions $H_{ij}$ as long as it is bilinear. In particular, the theorem holds for any order of nearest neighbour interaction, including the first- and third-nearest neighbour Heisenberg interactions, $J$ and $J_3$; we only require the ground state to exhibit a translational symmetry (defined by the magnetic unit cell), such that we are able to define the linear spin wave Hamiltonian in the momentum space \eqref{eq:ham}. A situation in which this theorem breaks down is a magnetic order that is incommensurate with the lattice and thus devoid of translational symmetry. In the following sections, we will be exploring this general sign change property in the $K\Gamma\Gamma'$ model.

\section{Classical phase diagram with the magnetic field in the $ab$-plane and thermal Hall conductivity}

We consider the $K\Gamma\Gamma'$ model under a magnetic field as
a model for $\alpha$-RuCl$_{3}$, which is given by $H=\sum_{\left\langle ij\right\rangle \in\lambda}\mathbf{S}_{i}^{T}H_{\lambda}\mathbf{S}_{j}-\mathbf{h}^{T}\sum_{i}\mathbf{S}_{i}$
where 
\[
H_{x}=\left[\begin{matrix}K & \Gamma' & \Gamma'\\
\Gamma' & 0 & \Gamma\\
\Gamma' & \Gamma & 0
\end{matrix}\right],H_{y}=\left[\begin{matrix}0 & \Gamma' & \Gamma\\
\Gamma' & K & \Gamma'\\
\Gamma & \Gamma' & 0
\end{matrix}\right],H_{z}=\left[\begin{matrix}0 & \Gamma & \Gamma'\\
\Gamma & 0 & \Gamma'\\
\Gamma' & \Gamma' & K
\end{matrix}\right].
\]
First, we chose a model with $K=-1$, $\Gamma=0.5$, and $\Gamma'=-0.02$, and we numerically
explored the classical ground states using simulated annealing\cite{PhysRevLett.117.277202,PhysRevResearch.2.013014} with in-plane magnetic fields $\mathbf{h}=h\left(\cos\theta\left[11\bar{2}\right]+\sin\theta\left[\bar{1}10\right]\right)$,
with $h\in\left[0,0.3\right]$ and $\theta\in\left[0,2\pi\right)$.
These computations resulted in a rich phase diagram shown in Fig.
\ref{fig:Classical-phase-diagram}a, which notably includes non-trivial
intermediate phases between zig-zag orders and the polarized state
-- the largest of which contains 10 sites per magnetic unit cell. Neutron
scattering experiments have been performed to obtain the phase diagram
for a field applied parallel to the $a$ axis, but not the $b$ axis\cite{balz2020fieldinduced}. Thus, a potential application of this in-plane phase diagram is to serve as an initial predictor for future experiments. 

\begin{figure}
\begin{centering}
\includegraphics[width=1\columnwidth]{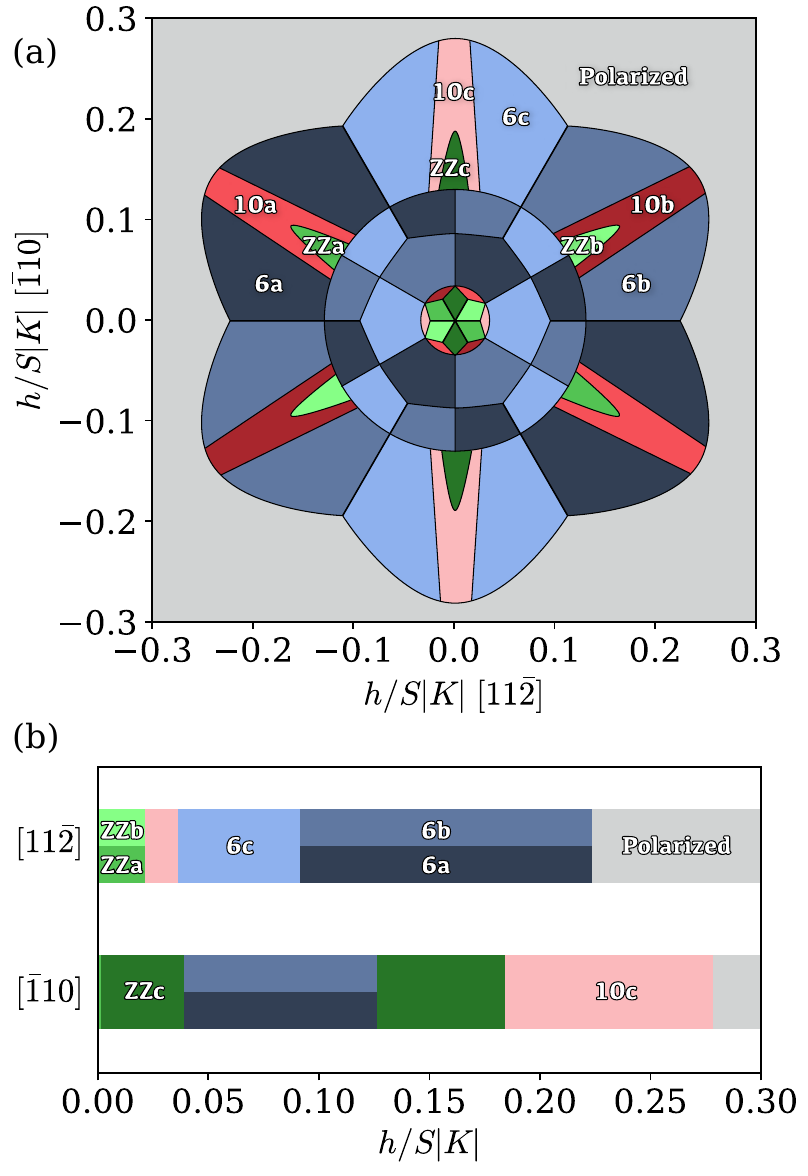}
\par\end{centering}
\caption{(a) Classical phase diagram of the $K\Gamma\Gamma'$ model under a magnetic field $\mathbf{h}$ in the honeycomb plane. The interaction parameters are fixed to be $(K,\Gamma,\Gamma') =(-1,0.5,-0.02)$. The polarized state is indicated by grey, the zig-zag (ZZ) order by shades of green, the 6-site order by shades of blue, and the 10-site order by shades of red. The a, b, and c configurations (not to be confused with the $a$, $b$, and $c$ crystallographic directions) of each order are shown with different shadings of the same color. (b) Classical phase diagrams under magnetic fields along the $[11\bar{2}]$ (upper) and $[\bar{1}10]$ (lower) directions, which are extracted from (a). For the $[11\bar{2}]$ field, ZZa and ZZb orders are degenerate. The 6a and 6b orders are degenerate in both cases. \label{fig:Classical-phase-diagram}}
\end{figure}

There are three distinct configurations for each of the ZZ, 6-site\cite{1807.06192}, and 10-site orders, examples of which are shown in the Supplemental Materials\cite{SM}. Note that when the field is along the $c$-direction\cite{PhysRevResearch.2.013014}, the three configurations of each order are degenerate and related by the $C_3$ symmetry of the system. For in-plane fields, the $C_{3}$ symmetry is broken, such that the three configurations of each order in general differ in energy. However, when the in-plane field is applied along high-symmetry directions that are equivalent to the $a$ or $b$ direction, two configurations of the same order may be degenerate. For example, 6a and 6b are degenerate under magnetic fields along the $a$ and $b$ directions, see Fig.~\ref{fig:Classical-phase-diagram}b.

\begin{figure}
\begin{centering}
\includegraphics[width=0.8\columnwidth]{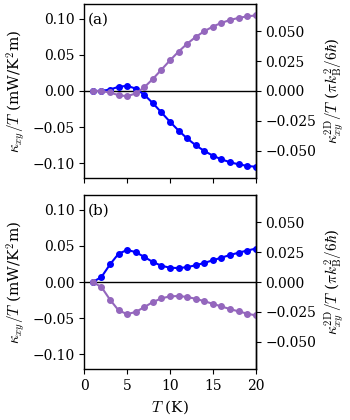}
\par\end{centering}
\caption{Thermal Hall conductivity $\kappa_{xy}/T$ as a function
of temperature $T$ due to magnons in (a) the 6-site order at $h/S|K|=0.2$ and (b) the polarized state at $h/S|K|=0.22$, with the interaction parameters $\left(K,\Gamma,\Gamma'\right)=\left(-1,0.5,-0.02\right)$. The magnetic field is applied along the $a$ ($-a$) direction for the blue (purple) curves, and the field strengths are chosen to border the critical field. The two-dimensional thermal Hall conductance $\kappa_{xy}^{2D}/T\equiv\kappa_{xy}d/T$ is also indicated. \label{fig:inplane-temp}}
\end{figure}

\begin{figure}
\begin{centering}
\includegraphics[width=0.9\columnwidth]{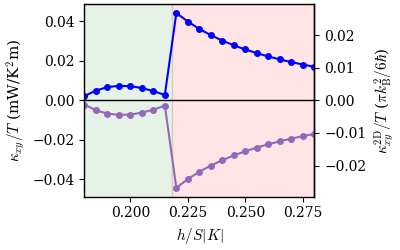}
\par\end{centering}
\caption{Thermal Hall conductivity $\kappa_{xy}/T$ due to magnons as a function
of field strength $h/S|K|$ at temperature $T=5 \u{K}$, with the interaction parameters $\left(K,\Gamma,\Gamma'\right)=\left(-1,0.5,-0.02\right)$. The magnetic field is applied along the $a$ ($-a$) direction for the blue (purple) curves. There is a phase transition from the 6-site order (green region) to the polarized state (red region) as the field increases. The two dimensional thermal Hall conductance $\kappa_{xy}^{2D}/T\equiv\kappa_{xy}d/T$ is also indicated. \label{fig:inplane-field}}
\end{figure}
To calculate $\kappa_{xy}$ due to magnons, we first obtained the linear spin wave dispersion $\varepsilon_{n\mathbf{k}}$ by diagonalizing (\ref{eq:dk}) using a Bogoliubov transformation\cite{Bogoliubov1947}. We then computed the Berry curvature $\boldsymbol{\Omega}_{n\mathbf{k}}$ using the Bogoliubov transformation matrices $\mathbf{T}_\mathbf{k}$. $\kappa_{xy}$ is given by\cite{PhysRevLett.106.197202,PhysRevB.89.054420,JPSJ.86.011010}
\begin{align}
	\kappa_{xy} = -\frac{k_B^2 T}{\hbar V}
	\sum_n  \sum_{\mathbf{k} \in \text{FBZ} }
	\left\{ c_2 \left[ g\left(\varepsilon_{n\mathbf{k}} \right) \right] - 			\frac{\pi^2}{3} \right\} 	\boldsymbol{\Omega}_{n\mathbf{k}},
\end{align}
where FBZ is the crystal first Brillouin zone, $c_2(x)=(1+x)\{\ln [(1+x)/x]\}^2 - (\ln (x))^2 - 2\text{Li}_2(-x)$, and $g$ is the Bose-Einstein distribution. To ensure the convergence of $\kappa_{xy}$ at system size $L$, we also checked that the Chern number
\begin{align}
	C_n = \frac{1}{2\pi} \sum_{\mathbf{k}} \frac{(2\pi)^2}{A} \boldsymbol{\Omega}_{n\mathbf{k}},
\end{align}
where $A$ is the total area of the system, converged to an integer for each magnon band. Topological magnons are indicated by finite Chern numbers. For example, when the field was applied along the $a$ direction, we found that the two magnon bands of the polarized state carried the Chern numbers $\pm 1$, while the six magnon bands of the 6-site order carried the Chern numbers $0,0,0,1,-1,0$. 


We set the spin magnitude to be $S=1/2$ in the linear spin wave theory and calculated the magnon thermal Hall conductivity. We assumed the strength of the Kitaev interaction to be $\left|K\right|=80\u K$\cite{PhysRevB.93.155143,PhysRevLett.120.217205}, and set the interlayer distance to be $d=5.72\ \text{\AA}$ for $\alpha$-RuCl$_{3}$\cite{PhysRevB.92.235119,PhysRevLett.120.217205,s41586-018-0274-0,2001.01899}. We present the thermal Hall conductivities for the 6-site order and
the polarized state as a function of temperature in Figs. \ref{fig:inplane-temp}(a)-(b),
and as a function of field strength across the phase boundary in Fig. \ref{fig:inplane-field}, when the field was applied along the $a$ direction. The field strengths $h$ in each plot were chosen to be close to
the critical field separating the two magnetic orders. We make some observations from these results. First, large unit cell orders like the 6-site order can give rise to a finite thermal Hall effect. Additionally, the thermal Hall conductivity of the polarized state is much larger than that of the 6-site order for the parameterization we chose; a similar pattern can be seen in the experimental results. Furthermore, the thermal Hall conductivities of both orders under magnetic fields along the $a$ and $-a$ direction are equal in magnitude and opposite in sign, as predicted by Theorem 1. 

\begin{figure}
\begin{centering}
\includegraphics[width=0.8\columnwidth]{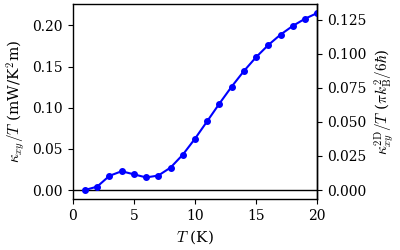}
\par\end{centering}
\caption{Thermal Hall conductivity $\kappa_{xy}/T$ as a function
of temperature $T$ due to magnons in the 6-site order under a magnetic field along the $b$ direction, with the interaction parameters $\left(K,\Gamma,\Gamma'\right)=\left(-1,0.5,-0.02\right)$ and the field strength $h/S|K|=0.1$. The two dimensional thermal Hall conductance $\kappa_{xy}^{2D}/T\equiv\kappa_{xy}d/T$ is also indicated. \label{fig:6b-temp}}
\end{figure}

We remark that the thermal Hall conductivity is not always zero when the external field is applied along the $b$ direction, as is for the polarized state. The lack of a signal for the polarized state is a special case that arises from a $C_2$ rotational symmetry about the $b$ axis in both the Hamiltonian and the ground state spin configuration. Such a $C_2$ symmetry can be broken by other magnetic orders, for example the 6-site order, which shows a finite thermal Hall conductivity when the field is applied along the $b$ direction (see Fig. \ref{fig:6b-temp}).

\section{Thermal Hall conductivity in the presence of tilted fields}

As explored in Sec.~II, there is a well-established
relation between the reversal of the field direction and the sign change of the magnon $\kappa_{xy}$. However, for a tilted field, the effect of reversing only the in-plane component $h_\parallel$, while fixing the out-of-plane component $h_\perp$, on the magnon $\kappa_{xy}$ is unclear. This is because the ground state spin configurations of $(h_\parallel,h_\perp)$ and $(-h_\parallel,h_\perp)$ have no direct relation in general, unlike those of $\mathbf{h}$ and $-\mathbf{h}$ which are related by flipping of the spins.

In the experiment\cite{2001.01899}, the tilted field was applied in the $ac$-plane. The tilting angle $\theta$ is defined to be the angle between the field and the $c$-axis. A sign change in $\kappa_{xy}$ was experimentally observed in the proposed spin liquid regime when $\theta$ was changed from $60^{\circ}$ and $-60^{\circ}$. In this section, we propose a scenario where such a sign change can happen in the magnon $\kappa_{xy}$. The underlying magnetically ordered ground state is the partially-polarized ferromagnet (PPF), which will be described below.

For a tilted field, assuming classical spins, the system enters the ferromagnetic state at sufficiently high fields, but is never completely polarized at finite fields. This effect originates from the competition between the $\Gamma$ interaction and the external field\cite{PhysRevB.96.064430}. We refer to such a ferromagnetic state as the PPF. As in the experiment, we applied fields in the $ac$-plane with tilting angles $60^{\circ}$ and $-60^{\circ}$ measured from the $c$-axis. Using the parametrization $\left(K,\Gamma,\Gamma'\right)=(-1,0.2,-0.02)$, we find that right after the system enters the PPF, the magnetization is largely in-plane, as shown in Tables \ref{tab:isotropic} and \ref{tab:anisotropic}. As the field strength increases, the in-plane (out-of-plane) component of the magnetization gradually decreases (increases).

\begin{table}
\begin{centering}
\begin{tabular}{cccc}
$h/S|K|$ & $S_{a}/S$ & $S_{b}/S$ & $S_{c}/S$\tabularnewline
\hline 
0.10 & 0.996103 & 0 & 0.088193\tabularnewline
0.12 & 0.994717 & 0 & 0.102656\tabularnewline
0.14 & 0.993218 & 0 & 0.116265\tabularnewline
0.16 & 0.991633 & 0 & 0.129088\tabularnewline
0.18 & 0.989983 & 0 & 0.141185\tabularnewline
\end{tabular}
\par\end{centering}
\caption{Magnetization of the partially-polarized phase as a function
of magnetic field. The spin $\mathbf{S}=(S_{a},S_{b},S_{c})$ is given in the crytallographic basis. We use the interaction parameters $(K,\Gamma,\Gamma')=(-1,0.2,-0.02)$ and the isotropic $g$ tensor $(g_\parallel,g_\perp)=(1,1)$. The field is tilted by $\theta=60^{\circ}$ from the $c$-axis towards the $a$-axis. The field range is chosen to be near the critical field of ferromagnetic transition.\label{tab:isotropic}}
\end{table}

\begin{table}
\begin{centering}
\begin{tabular}{cccc}
$h/S|K|$ & $S_{a}/S$ & $S_{b}/S$ & $S_{c}/S$\tabularnewline
\hline 
0.05 & 0.998427 & 0 & 0.056059\tabularnewline
0.07 & 0.997302 & 0 & 0.073410\tabularnewline
0.09 & 0.996064 & 0 & 0.088640\tabularnewline
0.11 & 0.994774 & 0 & 0.102106\tabularnewline
0.13 & 0.993470 & 0 & 0.114090\tabularnewline
\end{tabular}
\par\end{centering}
\caption{Magnetization of the partially-polarized phase as a function
of magnetic field. The spin $\mathbf{S}=(S_{a},S_{b},S_{c})$ is given in the crytallographic basis. We use the interaction parameters $(K,\Gamma,\Gamma')=(-1,0.2,-0.02)$ and the anisotropic $g$ tensor $(g_\parallel,g_\perp)=(2.3,1.3)$. The field is tilted by $\theta=60^{\circ}$ from the $c$-axis towards the $a$-axis. The field range is chosen to be near the critical field of ferromagnetic transition. \label{tab:anisotropic}}
\end{table}

\begin{figure*}
\includegraphics[width=0.7\textwidth]{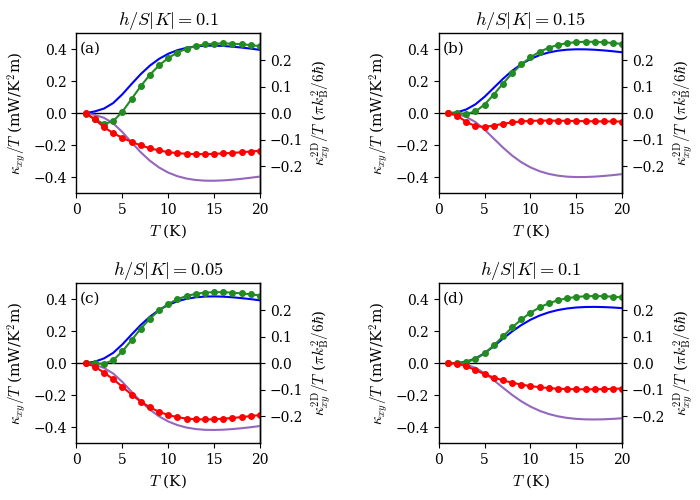}
\caption{Thermal Hall conductivity $\kappa_{xy}/T$ as a function of temperature $T$ due to magnons in various ferromagnetic states, with the interaction parameters $\left(K,\Gamma,\Gamma'\right)=(-1,0.2,-0.02)$. The magnetic field is applied in the $ac$-plane, $\mathbf{h}=hg_{\perp}\cos\theta\left[111\right]+hg_{\parallel}\sin\theta\left[11\bar{2}\right]$. An isotropic $g$ tensor $(g_\parallel, g_\perp)=(1,1)$ is used in (a) and (b), while an anisotropic $g$ tensor $(g_{\parallel}$, $g_{\perp})=(2.3,1.3)$ is used in (c) and (d). In each plot, the green (red) curve represents the data of the partially polarized ferromagnet with an almost in-plane magnetiztion, under a magnetic field with the tilting angle $\theta=60^{\circ}$ ($-60^{\circ}$). On the other hand, the blue (purple) curve represents the data of the ferromagnet with a completely in-plane magnetization (which is not the ground state, but plotted for comparison), under a magnetic field with the tilting angle $\theta=60^{\circ}$ ($-60^{\circ}$). The two dimensional thermal Hall conductance $\kappa_{xy}^{2D}/T\equiv\kappa_{xy}d/T$ is also indicated. \label{fig:tilted-temp}}
\end{figure*}

We plot the thermal Hall conductivities
as a function of temperature for isotropic (Figs.~\ref{fig:tilted-temp}a and \ref{fig:tilted-temp}b)
and anisotropic (Figs.~\ref{fig:tilted-temp}c and \ref{fig:tilted-temp}d) $g$-tensors. The $g$-tensor effectively modifies the field as $\mathbf{h}=hg_{\perp}\cos\theta\left[111\right]+hg_{\parallel}\sin\theta\left[11\bar{2}\right]$, where we chose $(g_{\parallel},g_{\perp})=(1,1)$ and $(g_{\parallel},g_{\perp})=(2.3,1.3)$ for the isotropic and anisotropic cases respectively\cite{Yadav2016,Winter2018,PhysRevB.94.064435}. The fields in \ref{fig:tilted-temp}a and \ref{fig:tilted-temp}c were chosen to be near the respective critical fields, such that the magnetizations of the PPFs were almost in-plane, i.e.~$S_a/S \approx \pm 1, S_c/S \approx 0$. We then expect the thermal Hall signals of these PPFs (e.g.~curves with dots in Fig.~\ref{fig:tilted-temp}c) to be similar to those of the ferromagnetic states with completely in-plane magnetizations, i.e.~$S_a/S = \pm 1, S_c/S = 0$ (e.g.~curves without dots in Fig.~\ref{fig:tilted-temp}c).
As we increase the field, the out-of-plane magnetization of the PPF becomes more significant, resulting in a larger deviation in $\kappa_{xy}$ from the case of a completely in-plane magnetization. Such a deviation is more prevalent in the case of isotropic $g$-tensor. This result is expected, since a heavier weight is placed on the in-plane component $h_\parallel$ of the field in the anisotropic case, which would result in a slower polarization as seen in Table \ref{tab:anisotropic}. Therefore, it is not guaranteed that the magnon thermal Hall conductivity will have exactly the same magnitude and opposite sign by changing $\theta$ from $60^\circ$ to $-60^\circ$. However, Fig.~\ref{fig:tilted-temp}c shows a regime in which we can obtain an approximate version of this behaviour.

\begin{figure}
\begin{centering}
\includegraphics[width=0.9\columnwidth]{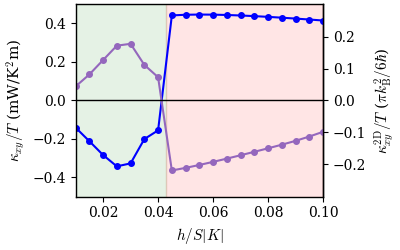}
\par\end{centering}
\caption{Thermal Hall conductivity $\kappa_{xy}/T$ due to magnons as a function of field strength $h/S|K|$ at temperature $T=15\protect\u K$, with the interaction parameters $(K,\Gamma,\Gamma')=(-1,0.2,-0.02)$. The magnetic field is applied along the $a$ (-$a$) direction for the blue (purple) curve. We use an anisotropic $g$ tensor ($g_{\parallel}=2.3$, $g_{\perp}=1.3$). There is a phase transition from the zig-zag order (green region) to the partially polarized state (red region) as the field increases. The two dimensional thermal Hall conductance $\kappa_{xy}^{2D}/T\equiv\kappa_{xy}d/T$ is also indicated. \label{fig:tilted-field}}
\end{figure}

In Fig. \ref{fig:tilted-field}, we plot the thermal Hall conductivities
for a zig-zag ordered state and a PPF state near the critical field. Some of these results are consistent with the thermal Hall signals that have been experimentally measured\cite{2001.01899}. We observe that the maximum PPF signal is larger than that of the zig-zag. It can be shown that this behaviour can be further optimized by tuning the system parameters such that the zig-zag signal is oppressed, while the PPF signal is preserved. Additionally, the PPF signal decays more rapidly as the field strength increases for the $+60^\circ$ field than the $-60^\circ$ field, which is also observed. Notably, the $+60^\circ$ signal as well as the $-60^\circ$ field also exhibits a plateau-like behaviour in the PPF region (the proposed spin liquid region in Ref.~\onlinecite{2001.01899}). However, there are also a few aspects of our result that still require further investigation. In particular, we observe a sign change in the zig-zag region near the critical field, which is expected due to the same argument applied for the PPF state, whereas no sign change was observed in the experiment. A possible explanation for this discrepancy is that we did not consider any 3-dimensional inter-layer ordering in our model. A recent neutron scattering experiment\cite{balz2020fieldinduced} revealed that there is a stacking of the layers in the zig-zag order, therefore if magnons are the predominant heat carrier, then the lack of a sign change most likely arises from this 3D order, which remains to be explored in a future work.

\section{Overall sign of $\kappa_{xy}$}

We have proven in Sec.~II that the sign change of $\kappa_{xy}$ upon reversing the field direction is universal (Theorem 1). However, the overall sign of $\kappa_{xy}$ for a given field direction depends on the choice of model parameters. For example, if we choose $(K,\Gamma,\Gamma')=(-1,0.5,-0.1)$ -- a different yet realistic parameterization of $\alpha$-RuCl$_3$ -- the overall sign of $\kappa_{xy}$ in the polarized state is opposite to those obtained from our earlier calculations. As shown in Fig. \ref{fig:thc-signs}, $\kappa_{xy}$ is negative (positive) when the field is applied along the $a$ ($-a$) direction, which is consistent with the overall sign observed in Ref.~\onlinecite{2001.01899}. This is unlike the sign of $\kappa_{xy}$ due to Majorana fermions in the non-Abelian Kitaev spin liquid, which is solely determined by the field direction\cite{KITAEV20062}. As we tune the interaction parameters, apart from a possible change of the overall sign of $\kappa_{xy}$, all other phenomenology remains the same. We have demonstrated that while the relative sign of $\kappa_{xy}$ is universal, its overall sign is parameter-dependent, and our analyses thus far are applicable to the experiments.

\begin{figure}
\begin{centering}
\includegraphics[width=0.9\columnwidth]{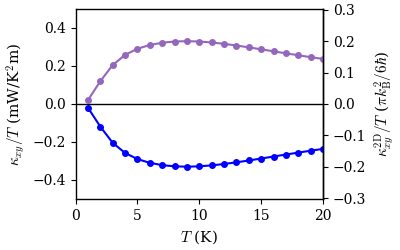}
\par\end{centering}
\caption{Thermal Hall conductivity $\kappa_{xy}/T$ due to magnons in the polarized state as a function of temperature $T$ at field strength $h/S|K|=0.34$, with the interaction parameters $(K,\Gamma,\Gamma')=(-1,0.5,-0.1)$. The magnetic field is applied along the $a$ (-$a$) direction for the blue (purple) curve. The two dimensional thermal Hall conductance $\kappa_{xy}^{2D}/T\equiv\kappa_{xy}d/T$ is also indicated. \label{fig:thc-signs}}
\end{figure}

\section{Discussion}

In the present work, we investigated the influence of topological magnons on thermal Hall transport in quantum magnets. We first presented an analytical proof showing that for an arbitrary 2D commensurate magnetic order as well as the field-polarized state, $\kappa_{xy}$ due to magnons changes sign whenever the magnetic field is reversed, irrespective of the initial orientation of the field. Partly motivated by a recent experiment on $\alpha$-RuCl$_3$\cite{2001.01899}, we investigated this effect numerically in a microscopic model for Kitaev magnets. For example, we demonstrated that the magnon $\kappa_{xy}$ of the 6-site magnetic order and field-polarized state changes the sign when the field orientation is changed from $a$ to $-a$ directions. We emphasize that even though this theorem guarantees the sign change, it does not necessarily lead to a finite magnitude. In the microscopic model for Kitaev magnets, we showed that the finite magnitude of $\kappa_{xy}$ arises due to the topological nature of the magnons, or the Berry curvature of the magnon bands. When the field is applied along $b$-direction, $\kappa_{xy}$ is generally non-zero for commensurate ordered states, in contrast to the vanishing $\kappa_{xy}$ in the case of the field-polarized state reported earlier\cite{chern2020sign} (as also observed in the experiment\cite{2001.01899}). 

In the experiment on $\alpha$-RuCl$_3$, $\kappa_{xy}$ was also measured under tilted magnetic fields and it was found that the sign change of $\kappa_{xy}$ occurs upon reversing only the in-plane field component, while preserving the out-of-plane field component \cite{2001.01899}. In this case, our theorem in Section II does not directly apply. On the other hand, we demonstrated that, as long as the in-plane field contribution to the Zeeman energy dominates, the sign change of $\kappa_{xy}$ still occurs. In addition, the plateau-like behaviour (or very slow change) of $\kappa_{xy}$ in the partially-polarized state as a function of magnetic field strength also arises in this situation. We note, however, that the value of $\kappa_{xy}$ is not quantized, and it varies as a function of temperature. 

One notable difference between our results and the published data on $\alpha$-RuCl$_3$ lies in the ZZ ordered region of the tilted magnetic field experiments. The sign reversal of $\kappa_{xy}$ upon flipping of the magnetic field (our theorem in Section II) applies for any commensurate order. Then, for a 2D ZZ order, there must also be a sign change in $\kappa_{xy}$, assuming that the in-plane contribution to the Zeeman energy dominates. In the experimental data, however, this sign change was absent in the ZZ region. A possible explanation of this difference lies in the 3D nature of the ZZ order. A neutron scattering (NS) experiment\cite{balz2020fieldinduced} confirmed that below the critical field, the ZZ ordering was 3D, while the polarized regime was 2D with a small interlayer correlation length. These NS measurements imply that our results are applicable for the polarized region, whereas our calculations for $\kappa_{xy}$ in the 2D ZZ ordered state may not directly apply. In order to resolve this issue, one may have to take into account possibly complex interlayer couplings to model the 3D ZZ order. The importance of the 3D nature of the ZZ order thus remains to be investigated in future studies. 

While there is always a relative sign between the magnon thermal Hall conductivities under opposite field directions according to Theorem 1, the overall sign of $\kappa_{xy}$ in each of these field directions really depends on the choice of model parameters. We have shown in Sec.~V an example parameter set that yields overall signs of $\kappa_{xy}$ consistent with those observed in Yokoi \textit{et al}\cite{2001.01899,signconvention}. Moreover, under in-plane fields along the $\pm a$ directions, our calculated $\kappa_{xy}$ due to magnons in the polarized state shows a monotonic increase in magnitude as a function of temperature in the low temperature regime, and vanishes in the zero temperature limit. These features agree with the recent experimental data in Ref.~\onlinecite{czajka2021oscillations}. Thus, the presence of a half-quantized thermal Hall conductivity at very low temperatures remains as the ultimate test for the non-Abelian Kitaev spin liquid.

\begin{acknowledgments}
We thank Takasada Shibauchi for helpful discussions. We acknowledge support from the Natural Sciences and Engineering Research Council of Canada (NSERC). E.Z.Z. was further supported by the NSERC Canada Graduate Scholarships - Doctoral (CGS-D), and L.E.C by the Ontario Graduate Scholarship. Y.B.K. was further supported by the Killam Research Fellowship from the Canada Council for the Arts and the Center for Quantum Materials at the University of Toronto. Most of the computations were performed on the Cedar and Niagara clusters, which are hosted by WestGrid and SciNet in partnership with Compute Canada.
\end{acknowledgments}

\end{document}


\title{Supplemental Materials}

\author{Emily Zhang}
\affiliation{Department of Physics, University of Toronto, Toronto, Ontario M5S 1A7, Canada}

\author{Li Ern Chern}
\affiliation{Department of Physics, University of Toronto, Toronto, Ontario M5S 1A7, Canada}

\author{Yong Baek Kim}
\affiliation{Department of Physics, University of Toronto, Toronto, Ontario M5S 1A7, Canada}

\maketitle

\setcounter{equation}{0}
\setcounter{figure}{0}
\setcounter{table}{0}

\renewcommand{\thesection}{S\arabic{section}}
\renewcommand{\theequation}{S\arabic{equation}}
\renewcommand{\thefigure}{S\arabic{figure}}
\renewcommand{\thetable}{S\arabic{table}}

\onecolumngrid

\section{Examples of the Magnetic Orders}

We provide the spin configurations for each phase observed on the phase diagram in the main text in Figs. \ref{fig:zza}-\ref{fig:10c}. Each spin configuration was obtained using classical simulated annealing. The examples shown in each figure used external fields under an in-plane field $\mathbf{h}=h\cos\theta\left[11\bar{2}\right]+h\sin\theta\left[\bar{1}10\right]$. 

\begin{figure*}
\begin{centering}
\includegraphics[width=0.6\columnwidth]{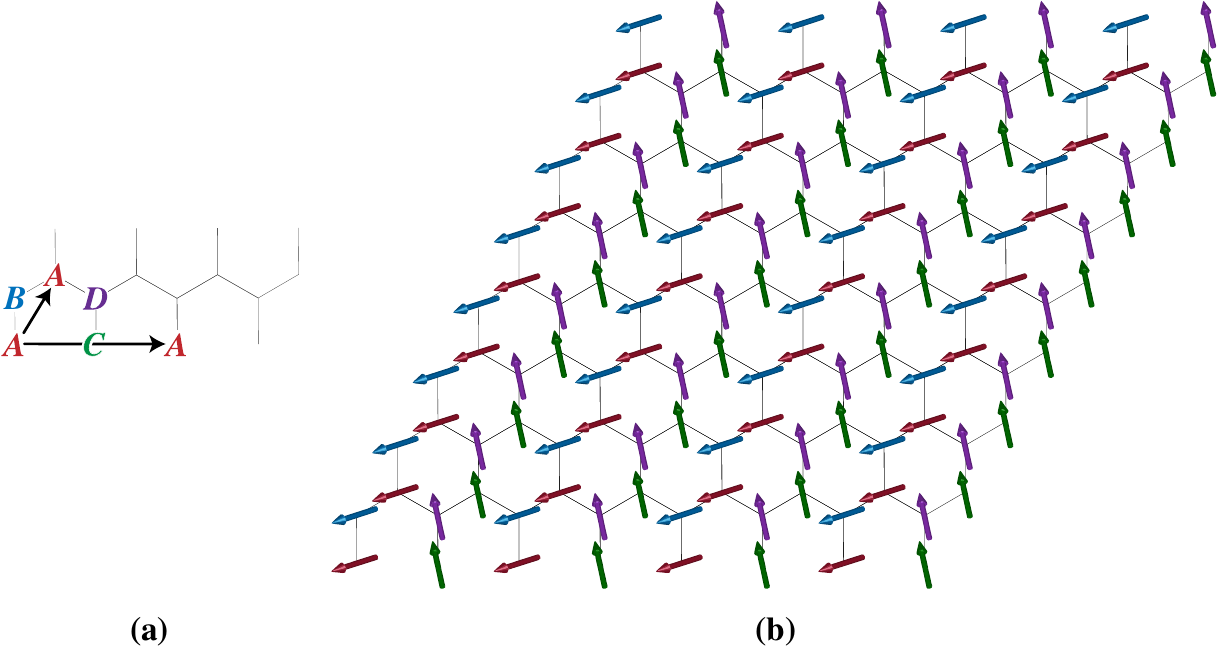}
\par\end{centering}
\caption{(a) The sublattice structure for the ZZa unit cell with the lattice
vectors shown in black. (b) The spin configuration of the ZZa order
shown on the honeycomb lattice. The parameters used were $K=-1,$
$\Gamma=0.5$, $\Gamma'=-0.02$, $h/S|K|=0.15$, and $\theta=150^{\circ}$.
\label{fig:zza}}

\end{figure*}

\begin{figure*}
\begin{centering}
\includegraphics[width=0.6\columnwidth]{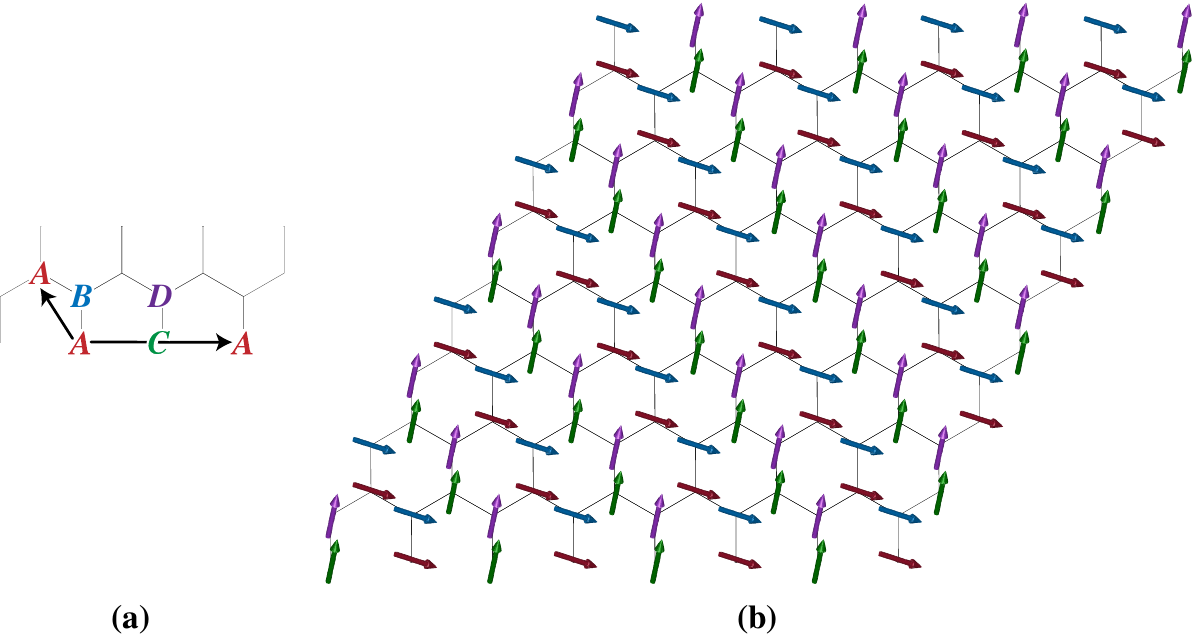}
\par\end{centering}
\caption{(a) The sublattice structure for the ZZb unit cell with the lattice
vectors shown in black. (b) The spin configuration of the ZZb order
shown on the honeycomb lattice. The parameters used were $K=-1,$
$\Gamma=0.5$, $\Gamma'=-0.02$, $h/S|K|=0.15$, and $\theta=30^{\circ}$.
\label{fig:zzb}}
\end{figure*}

\begin{figure*}
\begin{centering}
\includegraphics[width=0.6\columnwidth]{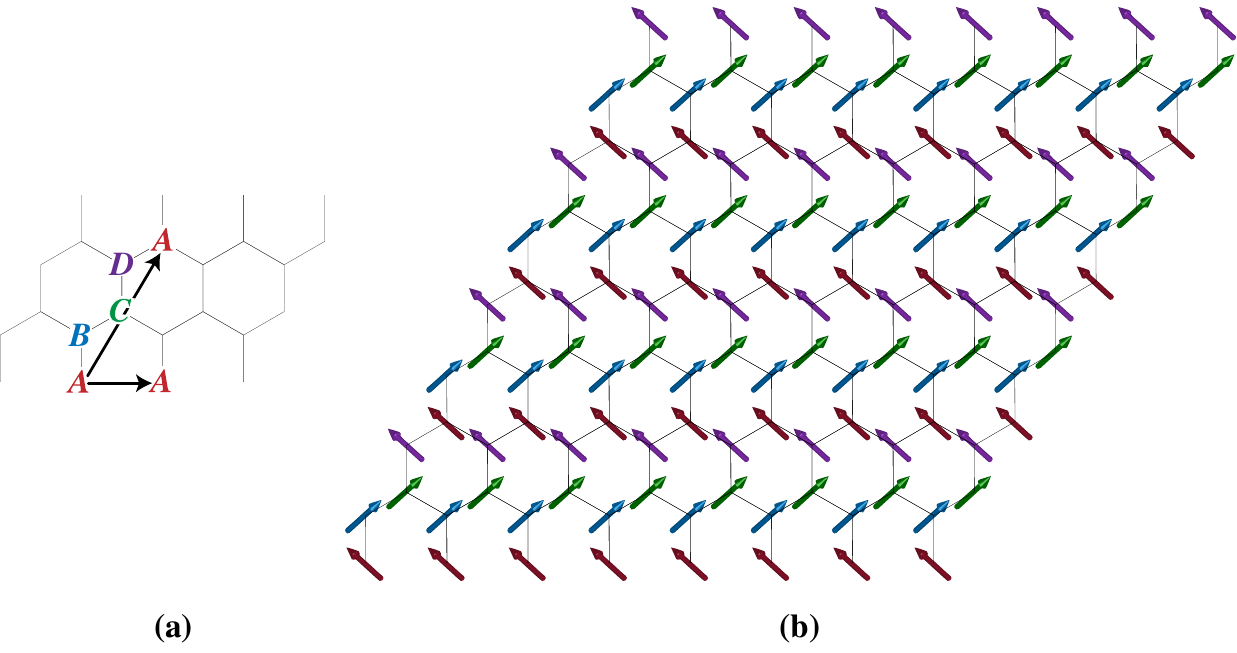}
\par\end{centering}
\caption{(a) The sublattice structure for the ZZc unit cell with the lattice
vectors shown in black. (b) The spin configuration of the ZZc order
shown on the honeycomb lattice. The parameters used were $K=-1,$
$\Gamma=0.5$, $\Gamma'=-0.02$, $h/S|K|=0.15$, and $\theta=90^{\circ}$.
\label{fig:zzc}}
\end{figure*}

\begin{figure*}
\begin{centering}
\includegraphics[width=0.6\columnwidth]{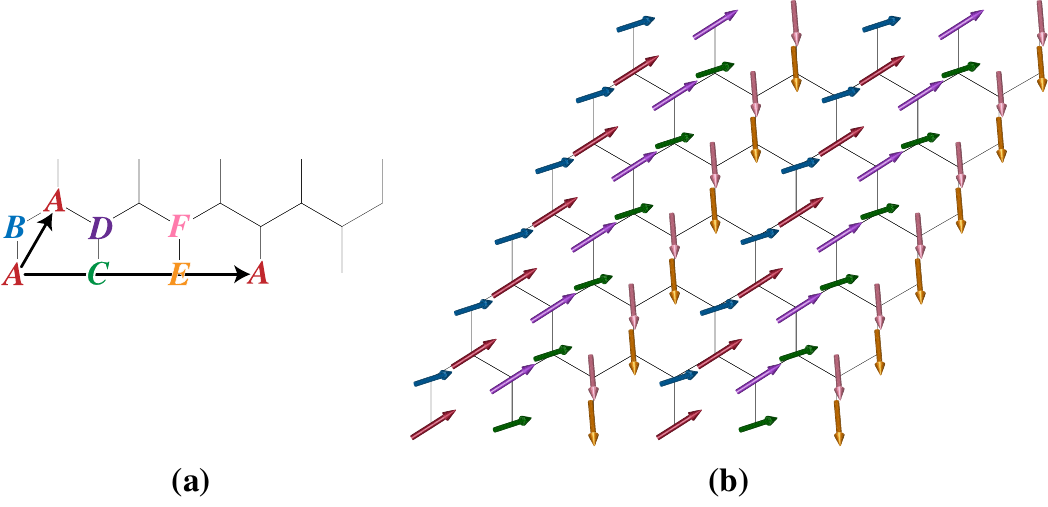}
\par\end{centering}
\caption{(a) The sublattice structure for the 6a unit cell with the lattice
vectors shown in black. (b) The spin configuration of the 6a order
shown on the honeycomb lattice. The parameters used were $K=-1,$
$\Gamma=0.5$, $\Gamma'=-0.02$, $h/S|K|=0.15$, and $\theta=0^{\circ}$.
\label{fig:6a}}
\end{figure*}

\begin{figure*}
\begin{centering}
\includegraphics[width=0.6\columnwidth]{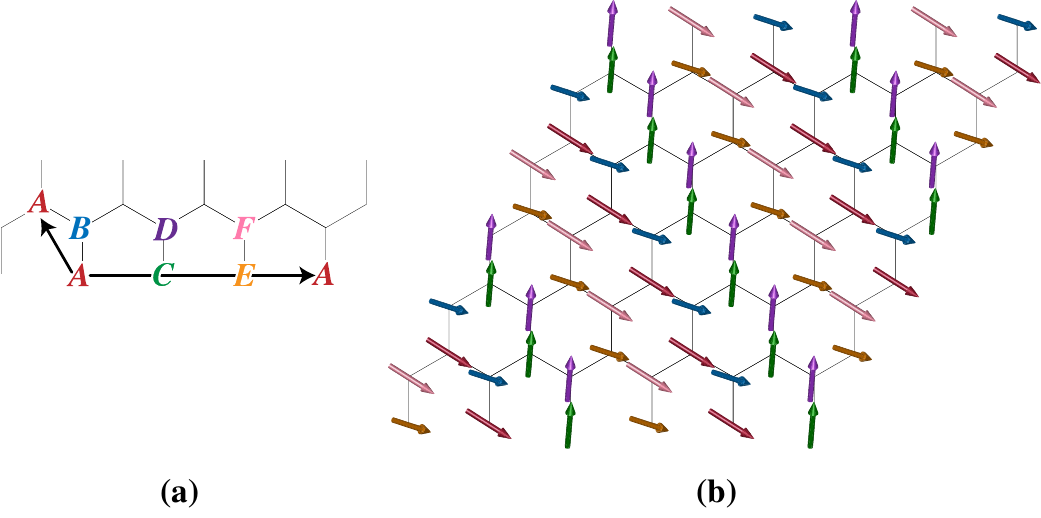}
\par\end{centering}
\caption{(a) The sublattice structure for the 6a unit cell with the lattice
vectors shown in black. (b) The spin configuration of the 6a order
shown on the honeycomb lattice. The parameters used are the same as
the 6a configuration, since the 6a and 6b are degenerate at $\theta=0^{\circ}$.
\label{fig:6b}}
\end{figure*}

\begin{figure*}
\begin{centering}
\includegraphics[width=0.6\columnwidth]{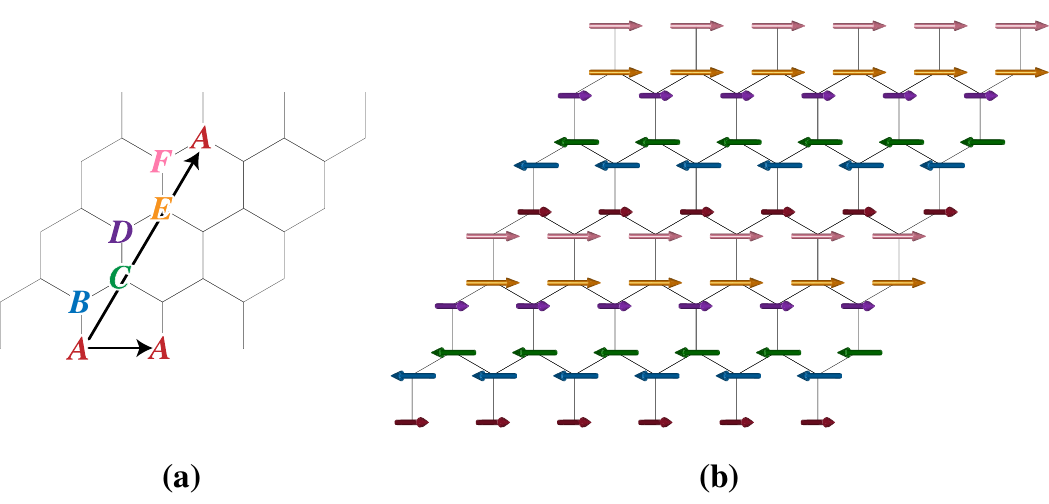}
\par\end{centering}
\caption{(a) The sublattice structure for the 6c unit cell with the lattice
vectors shown in black. (b) The spin configuration of the 6c order
shown on the honeycomb lattice. The parameters used were $K=-1,$
$\Gamma=0.5$, $\Gamma'=-0.02$, $h/S|K|=0.05$, and $\theta=0^{\circ}$.
\label{fig:6c}}
\end{figure*}

\begin{figure*}
\begin{centering}
\includegraphics[width=0.6\columnwidth]{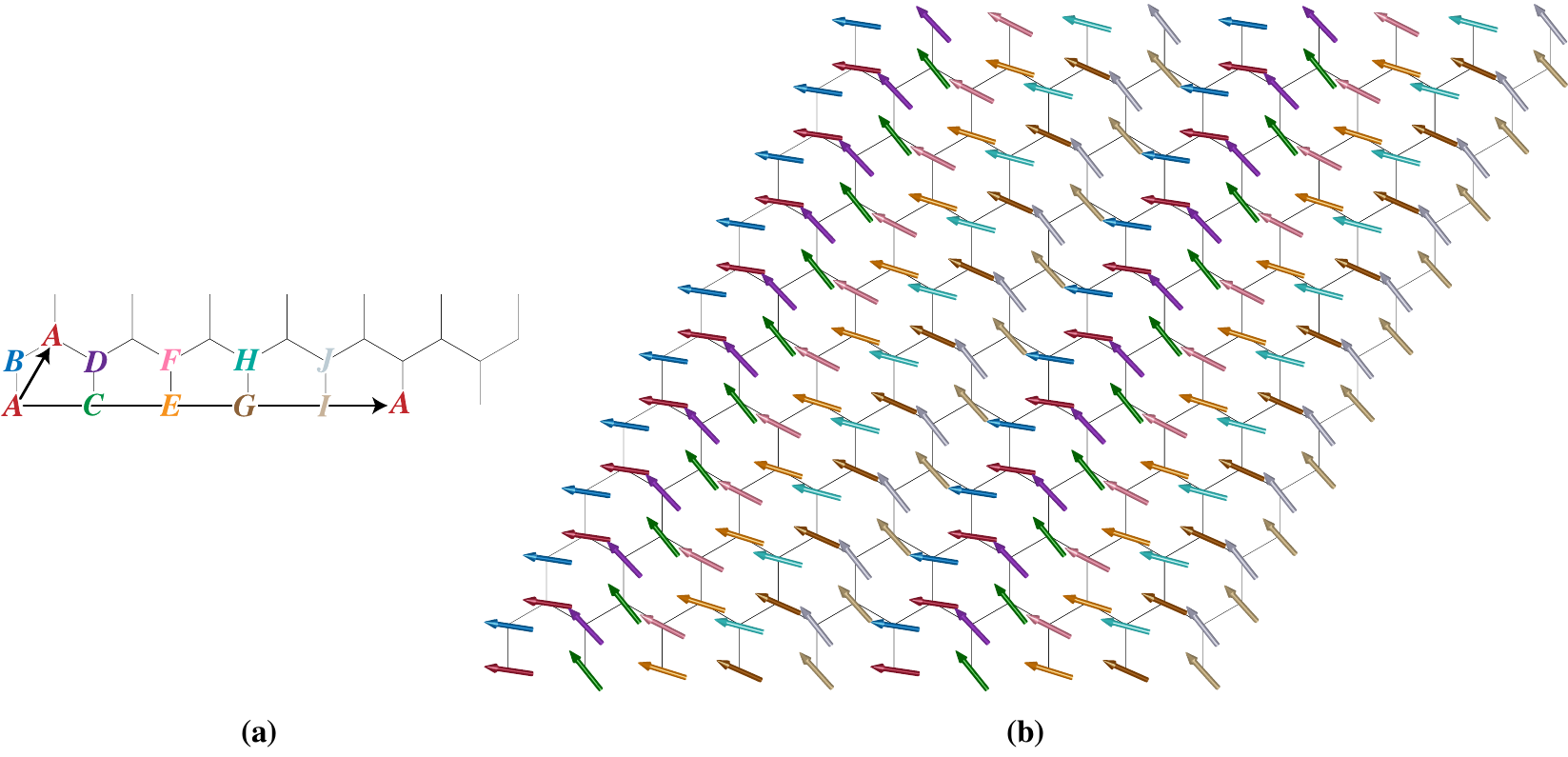}
\par\end{centering}
\caption{(a) The sublattice structure for the 10a unit cell with the lattice
vectors shown in black. (b) The spin configuration of the 10a order
shown on the honeycomb lattice. The parameters used were $K=-1,$
$\Gamma=0.5$, $\Gamma'=-0.02$, $h/S|K|=0.25$, and $\theta=150^{\circ}$.
\label{fig:10a}}
\end{figure*}

\begin{figure*}
\begin{centering}
\includegraphics[width=0.6\columnwidth]{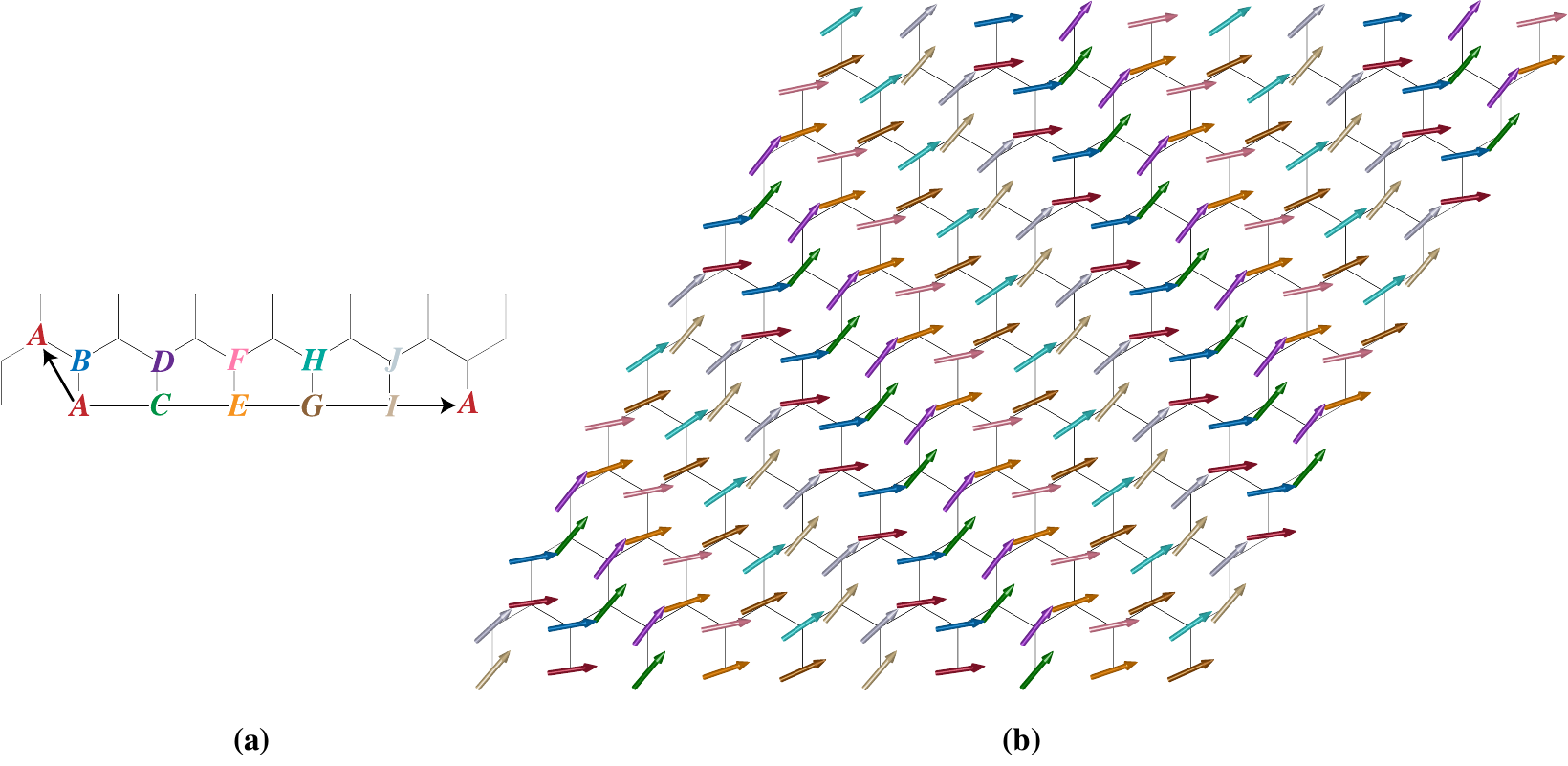}
\par\end{centering}
\caption{(a) The sublattice structure for the 10b unit cell with the lattice
vectors shown in black. (b) The spin configuration of the 10b order
shown on the honeycomb lattice. The parameters used were $K=-1,$
$\Gamma=0.5$, $\Gamma'=-0.02$, $h/S|K|=0.25$, and $\theta=30^{\circ}$.
\label{fig:10b}}
\end{figure*}

\begin{figure*}
\begin{centering}
\includegraphics[width=0.6\columnwidth]{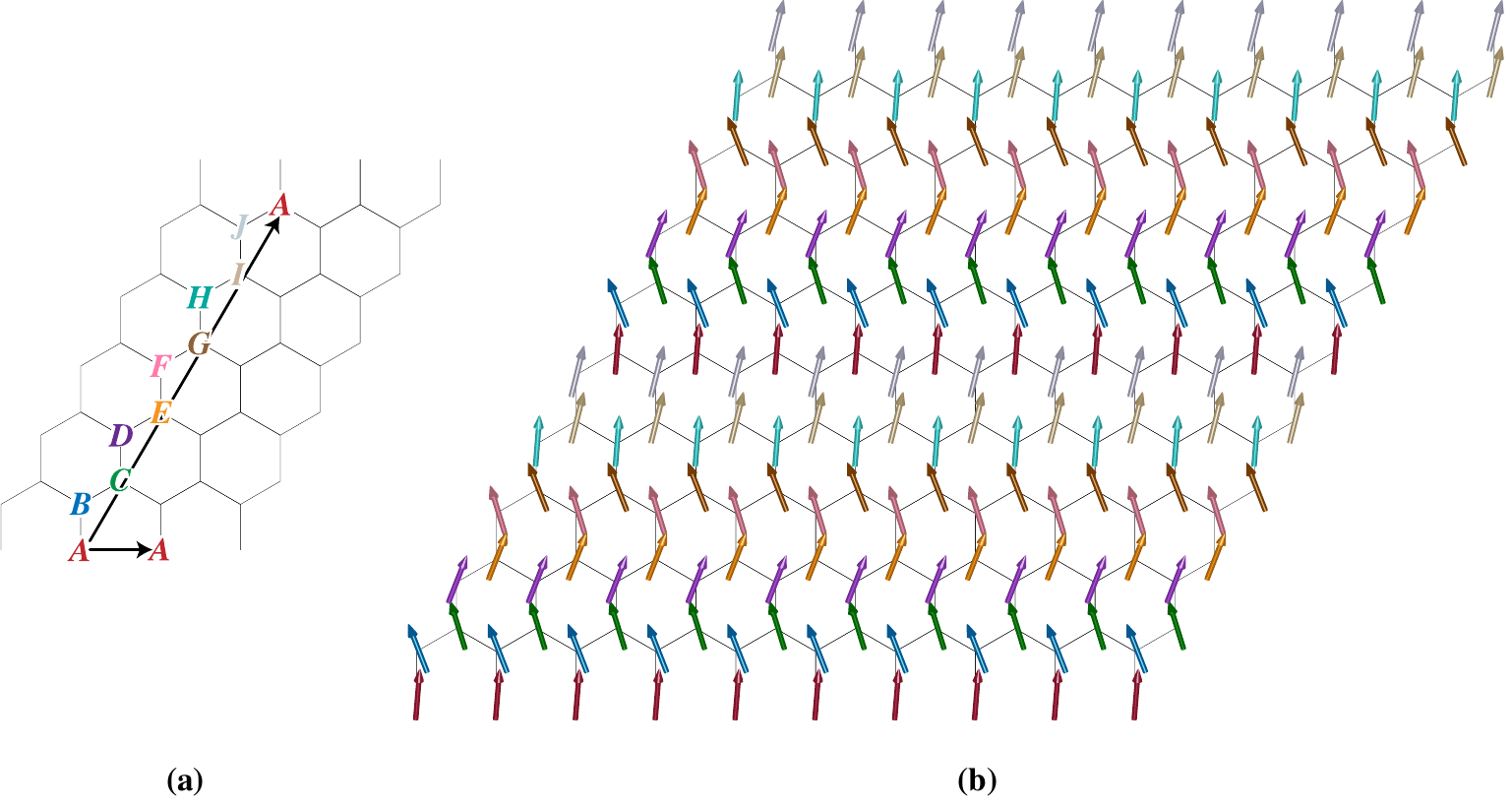}
\par\end{centering}
\caption{(a) The sublattice structure for the 10c unit cell with the lattice
vectors shown in black. (b) The spin configuration of the 10c order
shown on the honeycomb lattice. The parameters used were $K=-1,$
$\Gamma=0.5$, $\Gamma'=-0.02$, $h/S|K|=0.25$, and $\theta=90^{\circ}$.
\label{fig:10c}}
\end{figure*}